\newtheorem{theorem}{Theorem}
\newtheorem{proposition}{Proposition}
\theoremstyle{definition}
\newtheorem{definition}{Definition}
\numberwithin{equation}{section}
\newcommand{\rme}{\mathrm{e}}
\begin{document}

\title[]{On relationships between symmetries depending on arbitrary functions and integrals of discrete equations}
\author[]{S. Ya. Startsev}

\urladdr{\href{http://www.researcherid.com/rid/D-1158-2009}{http://www.researcherid.com/rid/D-1158-2009}}

\address{Institute of Mathematics, Ufa Scientific Center, Russian Academy of Sciences}

\begin{abstract}
The paper is devoted to the conjecture that an equation is Darboux integrable if and only if it possesses symmetries depending on arbitrary functions. We note that results of previous works together prove this conjecture for scalar partial differential equations of the form $u_{xy}=F(x,y,u,u_x,u_y)$. For autonomous semi-discrete and discrete analogues of these equations we prove that the sequence of Laplace invariants is terminated by zero for an equation if this equation admits an operator mapping any function of one independent variable into a symmetry of the equation. The vanishing of an Laplace invariant allows us to construct a formal integral, i.e. an operator that maps symmetries into integrals (including, generally speaking, trivial integrals). This and results of previous works together prove a `formal' version of the aforementioned conjecture in the semi-discrete and pure discrete cases.    
\end{abstract}

\keywords{Liouville equation, integral, Darboux integrability, higher symmetry, quad-graph equations, differential-difference equations, conservation laws}

\subjclass[2010]{37K10; 35L65; 37K05; 39A14; 35L70; 34K99}
%\pacs{02.30.Ik,02.30.Jr} 

\maketitle

\section{Introduction and the continuous case}\label{intro}
In the recent work \cite{Levi}, it was conjectured that the existence of symmetries depending on arbitrary functions is a necessary and \emph{sufficient} condition of the Darboux integrability for both partial differential equations and partial difference ones. Below we demonstrate that this conjecture is already proved for scalar partial differential equations of the form
\begin{equation}\label{hyp}
u_{xy}=F(x,y,u,u_x,u_y),
\end{equation}
and prove a formal version of the conjecture for differential-difference 
\[\left(u_{i+1}\right)_x = F\left(x,u_i,u_{i+1}, \left(u_i\right)_x\right), \qquad i \in \mathbb{Z}, \qquad \frac{\partial F}{\partial (u_i)_x} \ne 0,\]
and pure difference 
\[ u_{(i+1,j+1)} = F(u_{(i,j)},u_{(i+1,j)},u_{(i,j+1)}), \qquad i,j \in \mathbb{Z}, \qquad \frac{\partial F}{\partial u_{(i,j)}}\, \frac{\partial F}{\partial u_{(i+1,j)}}\, \frac{\partial F}{\partial u_{(i,j+1)}} \ne 0, \]
analogues of \eqref{hyp}. Here the semi-discrete equations are assumed to be uniquely solvable for $(u_i)_x$, and the discrete equations -- uniquely solvable for any argument of the right-hand side. 

Let us remind that equation \eqref{hyp} is said to be \emph{Darboux integrable} if it admits functions 
$w\left(x,y,u,\partial u /\partial x, \dots, \partial^n u /\partial x^n\right)$ and $\bar{w}\left(x,y,u,\partial u /\partial y, \dots, \partial^m u /\partial y^m\right)$ such that 
they essentially depend on at least one of the derivatives of $u$ and $D_y(w)=0$, $D_x(\bar{w})=0$; here $D_y$ and $D_x$ denote the total derivatives with respect to $y$ and $x$ by virtue of \eqref{hyp}, and the functions $w$ and $\bar{w}$ are called an $x$-\emph{integral} and a $y$-\emph{integral}, respectively. The definitions of Darboux integrability and integrals for the semi-discrete and discrete equations are almost the same, we only need to respectively replace partial derivatives of $u$ and the total derivatives by the shifts and the total differences for the corresponding discrete variables. 
In contrast to, for instance, \cite{HZS,GY}, the present paper deals only with the integrals without an explicit dependence on the discrete variables $i$ and $j$ as well as uses the existence of such integrals as a definition of Darboux integrability.\footnote{This assumption seems to be restrictive but it is very likely that Darboux integrable equations without an explicit dependence on $i$ and $j$ always admit integrals which do not depend on $i$ and $j$ too (although not all integrals of such equations are independent of $i$ and $j$ in accordance with examples in \cite{GY}).} 

According to \cite{ZhSS}, any Darboux integrable equation \eqref{hyp} admits operators of the form 
\begin{equation}\label{symopc}
S= \sum_{k=0}^{m-1} \alpha_k D_x^k, \qquad \bar{S}= \sum_{k=0}^{n-1} \bar{\alpha}_k D_y^k,
\end{equation}
such that $S(g)$, $\bar{S}(\bar{g})$ are symmetries of \eqref{hyp} for any $g \in \ker D_y$ and any $\bar{g} \in \ker D_x$. Here $g$, $\bar{g}$, $\alpha_k$ and $\bar{\alpha}_k$ may depend on $x$, $y$, $u$ and a finite number of the derivatives $\partial^p u /\partial x^p$, $\partial^q u /\partial y^q$ (all mixed derivatives of $u$ are excluded by virtue of \eqref{hyp}), and a function $f$ of the above variables is called a \emph{symmetry} if $(D_x D_y - F_{u_x} D_x - F_{u_y} D_y - F_u)(f)=0$. For brevity, the author offers to use the term `\emph{symmetry drivers}' for operators~\eqref{symopc} having the above properties. The most known example of the symmetry drivers was found in \cite{ZhSh} and is given by the operators $S= D_x + u_x$ and $\bar{S}= D_y+u_y$ for the Liouville equation $u_{xy}=\rme^u$. Note that arbitrary functions of $x$ and $y$ belong to $\ker D_y$ and $\ker D_x$, respectively, and symmetry drivers generate symmetries depending on arbitrary functions even if we do not assume the existence of integrals.

As it was shown in \cite{AdS}, the discrete and semi-discrete Darboux integrable equations also admit operators that map integrals and arbitrary functions of the independent variables ($i$, $j$ or $x$) into symmetries. The form of these symmetry drivers coincides with \eqref{symopc} up to replacing the derivatives with the shifts in $i$ and $j$. (More accurate definitions of the symmetry drivers are given below.) Thus, the part of the above conjecture is already proved for equation \eqref{hyp} and its aforementioned analogues: an equation admits symmetries depending on arbitrary functions if it is Darboux integrable. The present paper therefore focuses on converse statements.

For the continuous equations, such converse statement also was, in fact, proved in previous works. Indeed, according to \cite{SZh,AYu}, equation \eqref{hyp} is Darboux integrable if (and only if \cite{ZhSS,AK}) the sequence of the generalized Laplace invariants $h_k$, where $h_k$ are defined by the formulae
\[ h_{0} = F_u +F_{u_x} F_{u_y} - D_y(F_{u_y}), \qquad h_{1} = F_u +F_{u_x} F_{u_y} - D_x(F_{u_x}),   \]
\[ h_{k+1}= 2 h_k - D_x D_y(\ln(h_k))-h_{k-1}, \]
is terminated by zeros, i.e. if $h_p=0$ and $h_{-q}=0$ for some $p>0$ and $q \ge 0$. But the last condition holds for some $p \le m$ and $q < n$ if \eqref{hyp} admits symmetry drivers \eqref{symopc} (see \cite{StM}).
Hence, the Darboux integrability of \eqref{hyp} follows from the existence of symmetry drivers \eqref{symopc}.

In the present parer we prove that sequences of Laplace invariants for the semi-discrete and discrete analogues of \eqref{hyp} are also terminated by zeros if the corresponding equation admits symmetry drivers. In contrast to the continuous case, the termination of these sequences are not yet proved to be a sufficient condition of the Darboux integrability in the discrete and semi-discrete cases.
But the vanishing of Laplace invariants (together with other necessary conditions for the existence of symmetry drivers) allows us to construct formal integrals, i.e. operators that map symmetries into functions of integrals and one independent variable. Since the linearizations of integrals are formal integrals and the works \cite{ZhSS,AdS} in fact derive the symmetry drivers from these formal integrals, we obtain that \emph{the existence of symmetry drivers is necessary and sufficient condition for the existence of formal integrals}. It is noteworthy that, according to \cite{S17}, the last statement is valid for systems~\eqref{hyp} too (i.e. when $u$ and $F$ are vectors) despite the inapplicability of Laplace invariants to these systems. 
This statement therefore is quite general.

Thus, if the conjecture from \cite{Levi} is true in its original form, then the existence of formal integrals must be equivalent to the existence of `genuine' integrals. Possible ways to prove this are discussed at the ends of Sections~\ref{s2},\ref{deqs}. The connection between symmetry drivers and integrals is sometimes useful and, for example, allows to describe differential and difference substitutions of first order for function-parametrized families of evolution equations (see \cite{StM,foi,Stn}).   

\section{Differential-difference equations}\label{s2}

From now on, we, for brevity, omit $i$ in the subscripts of $u$ in all formulae and, in particular, write the aforementioned semi-discrete equation as
\begin{equation}\label{uix}
(u_1)_x = F(x,u,u_1,u_x).
\end{equation}
Due to the assumption $F_{u_x} \ne 0$, we can solve \eqref{uix} for $u_x$ and rewrite this equation in the form
\begin{equation}\label{umx}
(u_{-1})_x=\tilde{F}(x,u,u_{-1},u_x).
\end{equation}
The equations \eqref{uix}-\eqref{umx} allow us to express all derivatives $u_m^{(n)}:=\partial ^n u_{i+m} /\partial x^n$ of the shifts of $u$ in terms of $x$ and so-called \emph{dynamical variables} $u_p:=u_{i+p}$, $u^{(q)}:=\partial ^q u_i /\partial x^q$. The notation $g[u]$ indicates that the function $g$ depends on $x$ and a finite number of the dynamical variables. If a function $g$ may explicitly depend on $i$ in addition to $x$ and a finite set of the dynamical variables, then we use the notation $g_i[u]$. All functions are assumed to be analytical. 

Let $T$ denote the operator of the shift in $i$ by virtue of \eqref{uix}. It is defined by the following rules: $T(f_i(a,b,\dots))=f_{i+1}(T(a),T(b),\dots)$ for any function $f_i$; $T(u_p)=u_{p+1}$; $T(u^{(q)})=D^{q-1}(F)$ (i.e., the `mixed' variables $u_1^{(q)}$ are expressed in terms of $x$ and dynamical variables by using \eqref{uix}). Here $D$ is the total derivative with respect to $x$ by virtue of the equations \eqref{uix}-\eqref{umx}:
$$ D = \frac{\partial}{\partial x} + u^{(1)}\frac{\partial}{\partial u} + \sum_{k=1}^{\infty} \left( u^{(k+1)} \frac{\partial}{\partial u^{(k)}} + T^{k-1}(F) \frac{\partial}{\partial u_k} + T^{1-k}(\tilde{F}) \frac{\partial}{\partial u_{-k}}  \right),$$
The inverse shift operator $T^{-1}$ is defined in a similar way.

\begin{definition} A function $f_i[u]$ is called a \emph{symmetry} of equation \eqref{uix} if $L(f_i)=0$ for all $i$, where
\begin{equation}\label{lop}
L = T D - F_{u_x} D - F_{u_1} T - F_u.
\end{equation}
We say that operators $S=\sum_{k=0}^{\sigma} \alpha_k[u] D^k$ and $R=\sum_{k=0}^{r} \lambda_k[u] T^k$ are $x$- and $i$-\emph{symmetry drivers}, respectively, if $\alpha_{\sigma} \lambda_{r} \ne 0$, $\sigma, r \ge 0$ and $S(\xi)$, $R(\eta)$ are symmetries of \eqref{uix} for any $\xi_i[u] \in \ker (T-1)$ and any $\eta_i[u] \in \ker D$.  
\end{definition}
Since functions of $x$ and $i$ respectively belong to $\ker (T-1)$ and $\ker D$ for any equation \eqref{uix},
the above definition requires no assumptions about the existence of integrals (see Definition~\ref{sdint}). 

\begin{definition}\label{sdint} An equation of the form \eqref{uix} is called \emph{Darboux integrable} if there exist functions $X(x,u,u^{(1)},\dots,u^{(n)})$ and $I(x,u_\ell,u_{\ell+1},\dots,u_{\ell+m})$ such that $X_{u^{(n)}} \ne 0$, $I_{u_\ell} I_{u_{\ell+m}} \ne 0$ and the equalities $D(I)=0$, $T(X)=X$ hold. The functions $I$ and $X$ are respectively called an \emph{$i$-integral of order $m$} and an \emph{$x$-integral of order $n$} for the equation \eqref{uix}.

Operators $\mathcal{I}=\sum_{k=0}^{m} \mu_k[u] T^k$ and $\mathcal{X}=\sum_{k=0}^{n} \beta_k[u] D^k$ are said to be \emph{formal} $i$- and $x$-\emph{integrals}, respectively, if $\mu_m \ne 0$, $\beta_n \ne 0$, $m, n > 0$ and the operator identities $D \mathcal{I} = \sum_{k=0}^{m-1} \nu_k[u] T^k L$, $(T-1) \mathcal{X} = \sum_{k=0}^{n-1} \gamma_k[u] D^k L$ hold for some functions $\nu_k[u]$, $\gamma_k[u]$. 
\end{definition}
The last two defining relations mean that $\mathcal{I}$ and $\mathcal{X}$ map symmetries (if they exist) into $\ker D$ and $\ker (T-1)$. Since $T^{-\ell}$ maps $i$-integrals into $i$-integrals, we set $\ell = 0$ without loss of generality. Calculations similar to those used in \cite{Stn,StS} show that the \emph{linearizations} $I_{*}= \sum_{k=0}^m I_{u_k} T^k$,  $X_{*}= \sum_{k=0}^n X_{u^{(k)}} D^k$ of integrals $I$ and $X$ are formal $i$- and $x$-integrals, respectively. 

Any Darboux integrable equation \eqref{uix} admits both $x$- and $i$-symmetry drivers. This was proved in \cite{AdS} by using Laplace invariants. Let us define them. Introducing the main Laplace invariants $G_0 = F_u + F_{u_x} F_{u_1} - D(F_{u_x})$ and $H_0 = F_u + F_{u_x} T^{-1}(F_{u_1})$, we can rewrite \eqref{lop} as
\begin{equation}\label{fact}
L=(D-F_{u_1})(T-F_{u_x})-G_0 = (T-F_{u_x})(D-T^{-1}(F_{u_1}))-H_0.
\end{equation} 
If $G_0 \ne 0$, then we set $a_1 = F_{u_x} T(G_0)/G_0$, $L_{-1} = (T-a_1)(D-F_{u_1}) - T(G_0)$. The direct check shows that $(T-a_1) L = L_{-1} (T - F_{u_x})$. The transition from $L_0:=L$ to $L_{-1}$ is called the Laplace $i$-transformation. It can be applied to any operator of the form $T D - a[u] D - b[u]T - c[u]$ if we replace $F_{u_x}$, $F_{u_1}$ and $F_u$ with $a$, $b$ and $c$ in the above formulae. In particular, we can rewrite  
\[ L_{-1}=(D-T(F_{u_1}))(T-a_1)-G_1, \qquad G_1= T(G_0)- D(a_1) + a_1 (T(F_{u_1}) -F_{u_1}) \]
and apply the Laplace $i$-transformation to $L_{-1}$ if $G_1 \ne 0$, and so on. Repeating this procedure,
we obtain the sequence of the operators $L_{-k}=(D-T^k(F_{u_1}))(T-a_k)-G_k$, $k > 0$, where $a_k$ and the \emph{Laplace $i$-invariants} $G_k$ are defined by the recurrent formulae
\[ a_k=a_{k-1} T(G_{k-1})/G_{k-1},\quad a_0=F_{u_x}, \qquad G_k= T(G_{k-1})- D(a_k) + a_k (T^k(F_{u_1}) -T^{k-1}(F_{u_1})). \]

The Laplace $x$-transformation is defined in a similar way. The iterations of this transformation generate the sequence of the operators $L_k = (T-F_{u_x})(D-T^{-1}(b_k))-H_k$, $k > 0$,
where $b_k$ and the \emph{Laplace $x$-invariants} $H_k$ are calculated by the formulae
\[ b_k = T^{-1}(b_{k-1}) + D(H_{k-1})/H_{k-1}, \quad b_0=F_{u_1}, \qquad
H_{k}=H_{k-1}+ F_{u_x} ( T^{-1}(b_k) - b_k) + D(F_{u_x}), \]
By construction, the Laplace invariants and the operators $L_k$ for $k>0$ satisfy the equalities
\begin{eqnarray}
& (T - a_k)L_{1-k}=L_{-k}(T-a_{k-1}), \qquad (T - a_k)  G_{k-1} = T(G_{k-1})  (T-a_{k-1}), \label{il} \\
& (D - b_k)L_{k-1}=L_{k}(D-T^{-1}(b_{k-1})). \label{ild}
\end{eqnarray}
Here and below we use the notation $P g$ for the composition of an operator $P$ and the multiplication by a function $g[u]$, i.e. $P g$ is an operator and differs from the function $P(g)$.

The following statements was proved in \cite{AdS}:

\noindent 1) if equation \eqref{uix} admits an $x$-integral $X$ of order $n$, then
\begin{equation}\label{condx}
H_q=0 \;\; \text{for some} \; q<n \qquad \quad \text{and} \qquad \quad \exists \theta[u] \ne 0\;\;\; \text{such that} \;\; T(\theta)=F_{u_x}^{-1} \theta; 
\end{equation}

\noindent 2) if equation \eqref{uix} admits an $i$-integral $I$ of order $m$, then
\begin{equation}\label{condi}
G_p=0 \;\; \text{for some} \; p<m \qquad \quad \text{and} \qquad \quad \exists \tau[u] \ne 0\;\;\; \text{such that} \;\; D(\tau) + F_{u_1} \tau=0;
\end{equation}

\noindent 3) if both conditions \eqref{condx} and \eqref{condi} hold, then
\begin{equation}\label{xsd}
 S= \frac{1}{G_{0}} \left(D - F_{u_1} \right) \dots \frac{1}{G_{p-1}} \left(D - T^{p-1}(F_{u_1}) \right)   \frac{G_0 \dots G_{p-1}}{\theta}, 
\end{equation}
\begin{equation}\label{isd}
R= \frac{1}{H_{0}} \left(T - F_{u_x} \right) \dots \frac{1}{H_{q-1}} \left(T - F_{u_x} \right)  \frac{T^{-q}(H_0) \dots T^{-1}(H_{q-1})}{T^{-(q+1)}(\tau)} 
\end{equation}
respectively are $x$- and $i$-symmetry drivers of \eqref{uix}. ($S=\theta^{-1}$ and $R=T^{-1}(\tau^{-1})$ in the cases $p=0$ and $q=0$, respectively.)

But the corresponding proofs use only the fact that $X_*$ and $I_*$ are formal integrals (see a similar reasoning in the proof of Proposition~\ref{p2} below), and the work \cite{AdS} actually proves the necessity of the conditions~\eqref{condx} and \eqref{condi} for the existence of \emph{formal} $x$- and $i$-integrals, respectively. The converse statements can easily be derived from \eqref{il},\eqref{ild}: the direct calculation\footnote{See \cite{foi} or a similar calculation in the proof of Proposition~\ref{t2} below if more details are needed.} shows that
\begin{equation}\label{xfi}
\mathcal{X}=\theta[u] (D-T^{-1}(b_{q})) (D-T^{-1}(b_{q-1})) \dots (D-T^{-1}(b_0))
\end{equation}
is a formal $x$-integral of equation \eqref{uix} if \eqref{condx} holds, while \eqref{condi} implies that
\begin{equation}\label{ifi}
\mathcal{I}=T^p(\tau) (T-a_p)(T-a_{p-1})\dots (T-a_0)
\end{equation}
is a formal $i$-integral. In addition, the work \cite{foi} proves that $H_q=0$ for some $q \le r$ and $D(\tau) + F_{u_1} \tau =0$ for $\tau = 1/T(\lambda_r)$ if \eqref{uix} admits an $i$-symmetry drivers $R=\sum_{k=0}^{r} \lambda_k[u] T^k$.

\begin{theorem}\label{msd}
Equation~\eqref{uix} admits both formal $i$-integrals and formal $x$-integrals if and only if it possesses both $i$- and $x$-symmetry drivers.
\end{theorem}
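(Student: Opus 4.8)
The plan is to route both directions of the equivalence through the conjunction of the two conditions \eqref{condx} and \eqref{condi}. The first thing I would record is that the existence of both formal integrals is equivalent to \eqref{condx} and \eqref{condi} holding at the same time. One implication is the reinterpretation of \cite{AdS} noted above: a formal $x$-integral forces \eqref{condx} and a formal $i$-integral forces \eqref{condi}, because the arguments proving statements 1)--2) above use nothing beyond the defining operator identities of a formal integral (cf.\ the proof of Proposition~\ref{p2}). The reverse implication is witnessed by the explicit operators \eqref{xfi} and \eqref{ifi}: as recalled above, \eqref{xfi} is a formal $x$-integral whenever \eqref{condx} holds and \eqref{ifi} is a formal $i$-integral whenever \eqref{condi} holds, these being telescoping identities built from the intertwining relations \eqref{il}, \eqref{ild} of the type checked in the proof of Proposition~\ref{t2}.

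Granting this reduction, the implication ``both formal integrals $\Rightarrow$ both symmetry drivers'' is immediate: the formal integrals give \eqref{condx} and \eqref{condi}, and then statement 3) of \cite{AdS} produces the explicit drivers \eqref{xsd} and \eqref{isd}. The substance of the theorem is therefore the converse, for which I would assemble \eqref{condx} and \eqref{condi} out of two complementary computations. The first is already available: by \cite{foi}, an $i$-symmetry driver $R=\sum_{k=0}^{r}\lambda_k[u]T^k$ forces $H_q=0$ for some $q\le r$ and makes $\tau:=1/T(\lambda_r)$ a solution of $D(\tau)+F_{u_1}\tau=0$, so it supplies the Laplace-invariant clause of \eqref{condx} and the $\tau$-clause of \eqref{condi}. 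The second, which I regard as the heart of the argument, is the mirror statement for an $x$-symmetry driver $S=\sum_{k=0}^{\sigma}\alpha_k[u]D^k$: that $G_p=0$ for some $p\le\sigma$ and that $\theta:=1/\alpha_\sigma$ solves $T(\theta)=F_{u_x}^{-1}\theta$, which supplies the remaining clauses of \eqref{condi} and \eqref{condx}. Combining the two, \eqref{condx} and \eqref{condi} both hold, and the first paragraph then produces the two formal integrals.

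I would prove the mirror statement by the argument dual to the $i$-side computation of \cite{foi}. Since $D$ and $T$ commute and $\ker(T-1)$ contains all functions of $x$, the requirement $L(S(\xi))=0$ for every $\xi\in\ker(T-1)$ is equivalent to the vanishing, coefficient by coefficient, of the ordinary differential operator in $D$ to which $LS$ reduces on such arguments. From \eqref{il} with $k=1$ and $a_0=F_{u_x}$ one has $(T-a_1)L=L_{-1}(T-a_0)$, so $S_1:=(T-a_0)S$ satisfies the same condition with $L$ replaced by $L_{-1}$; and comparing $D$-orders in the factorization $L=(D-F_{u_1})(T-a_0)-G_0$ forces the top coefficient of $S_1$, restricted to $\ker(T-1)$, to vanish --- this is precisely $T(\alpha_\sigma)=a_0\alpha_\sigma$, i.e.\ the relation for $\theta$ --- and drops the $D$-order of $S_1$ below that of $S$. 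Iterating, with the relations $(T-a_{k+1})L_{-k}=L_{-k-1}(T-a_k)$ and the factorizations $L_{-k}=(D-T^k(F_{u_1}))(T-a_k)-G_k$ in the same roles, produces operators $S_k:=(T-a_{k-1})S_{k-1}$ of strictly decreasing $D$-order as long as the $G_k$ encountered are nonzero; after at most $\sigma$ steps the $D$-order reaches zero, and a direct look at $L_{-p}$ acting on an order-zero operator then forces the corresponding $G_p$ to vanish.

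The step I expect to be the main obstacle is this mirror computation. The difficulty is bookkeeping rather than conceptual: one has to confirm that the symmetry-driver property is genuinely inherited under each Laplace $i$-transformation --- this is where the precise form of the coefficients $a_k$ and the auxiliary identity $a_kG_{k-1}=a_{k-1}T(G_{k-1})$ enter --- and that the non-commuting multiplications by the shifted and differentiated coefficients do not disturb the $D$-order count that powers the induction; the remaining pieces are either quoted from \cite{AdS,foi} or are routine telescoping verifications.
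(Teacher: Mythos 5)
Your overall route is the paper's own: you reduce everything to the pair of conditions \eqref{condx}, \eqref{condi} (formal integrals being equivalent to these via the reinterpretation of \cite{AdS} and the explicit operators \eqref{xfi}, \eqref{ifi}), you get the drivers from statement 3) via \eqref{xsd}--\eqref{isd}, you quote \cite{foi} for the $i$-driver half of the converse, and the remaining ``mirror statement'' for an $x$-symmetry driver is exactly the paper's Proposition~1, which you also attack the same way: restrict $L S$ to $\ker(T-1)$, collect coefficients of $f^{(k)}$, and iterate the Laplace $i$-transformation, your $S_k=(T-a_{k-1})S_{k-1}$ being the paper's operators $A_{k-1}$ applied to the coefficients and your top-coefficient relations being \eqref{ch1} and the second equality of \eqref{ind}.

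The final step of that computation, however, has a genuine gap. After the order has dropped to zero, the restriction of $S_\sigma$ to $\ker(T-1)$ is multiplication by $\gamma=A_{\sigma-1}(\alpha_0)$, and $L_{-\sigma}(\gamma f)=0$ only yields $(T-a_\sigma)(\gamma)=0$ and $G_\sigma\,\gamma=0$; this forces $G_\sigma=0$ only if $\gamma\ne 0$, which nothing in your argument guarantees. The free term $\alpha_0$ is not assumed nonzero (for instance $S D$ is again an $x$-symmetry driver and has $\alpha_0=0$), and even when $\alpha_0\ne 0$ the operators $T-a_k$ have nontrivial kernels, so $A_{\sigma-1}(\alpha_0)$ may vanish; more generally the restricted operator can become identically zero before you reach order zero, and then ``a direct look'' at $L_{-p}$ acting on it forces nothing. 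This degenerate case is precisely what the paper's proof is built to close: besides the top-coefficient identities it keeps the sub-leading relations $G_p A_{p-1}(\alpha_{\sigma-p})=A_p(\alpha_{\sigma-p-1})$ of \eqref{ind} and runs them backwards from $G_\sigma A_{\sigma-1}(\alpha_0)=0$ down to $A_{-1}(\alpha_\sigma)=\alpha_\sigma=0$, contradicting the definition of a driver. (A patch within your scheme is possible: if the restriction of $S_k$ to $\ker(T-1)$ vanishes identically for the first time at step $k$, then $0=(D-T^{k-1}(F_{u_1}))(S_k(f))-G_{k-1}S_{k-1}(f)$ gives $G_{k-1}S_{k-1}(f)=0$ for all $f(x)$, whence $G_{k-1}=0$ with $k-1\le\sigma$.) Without one of these additions your induction does not close, so as written the key step fails.
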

Taking the previous two paragraphs into account, we only need to prove the following statement for establishing Theorem~\ref{msd}.
\begin{proposition} If equation \eqref{uix} admits an $x$-symmetry driver $S=\sum_{k=0}^{\sigma} \alpha_k[u] D^k$, then $G_p=0$ for some $p \le \sigma$ and $T(\theta)=F_{u_x}^{-1} \theta$ holds for $\theta = \alpha_{\sigma}^{-1}$.
\end{proposition}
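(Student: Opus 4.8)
The plan is to follow the iteration of the Laplace $i$-transformation, lowering the order of the symmetry driver by at least one at each step, exactly as in the continuous analogue of this statement (cf.\ \cite{StM}). Since functions of $x$ alone lie in $\ker(T-1)$, it suffices to use the identity $L(S(\xi))=0$ for $\xi=\xi(x)$. The operators $D$ and $T$ commute (this is immediate from the definitions of $D$ and $T$ given above), so for any $\xi\in\ker(T-1)$ one has $T(D^k\xi)=D^k(T\xi)=D^k\xi$, whence $(T-F_{u_x})S(\xi)=\hat S(\xi)$ for
\[\hat S:=\sum_{k=0}^{\sigma}\bigl(T(\alpha_k)-F_{u_x}\alpha_k\bigr)D^k.\]
Combining this with the first factorization in \eqref{fact} rewrites the symmetry-driver condition as
\[(D-F_{u_1})\hat S(\xi)=G_0\,S(\xi),\qquad \xi\in\ker(T-1).\]

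With this in hand I would first extract the leading coefficient. Applying the last identity to an arbitrary $\xi(x)$ and comparing the coefficients of $\xi^{(\sigma+1)}(x)$ — which is legitimate because these derivatives are algebraically independent of the dynamical variables — gives $T(\alpha_\sigma)=F_{u_x}\alpha_\sigma$, i.e.\ $T(\theta)=F_{u_x}^{-1}\theta$ for $\theta=\alpha_\sigma^{-1}$; this is the second assertion, and it also shows that the order of $\hat S$ is at most $\sigma-1$.

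It then remains to exhibit $p\le\sigma$ with $G_p=0$. If $G_0=0$ we are done with $p=0$; if $\hat S=0$, then $G_0\,S=0$ together with $\alpha_\sigma\ne 0$ forces $G_0=0$ again. Otherwise $G_0\ne 0$, so $a_1$ and $L_{-1}$ are defined, and the crucial point is that $\hat S$ is itself an $x$-symmetry driver of $L_{-1}$: using $(D-F_{u_1})\hat S(\xi)=G_0\,S(\xi)$ and then the second relation in \eqref{il} with $k=1$ (recall $a_0=F_{u_x}$), one obtains
\[(T-a_1)\bigl((D-F_{u_1})\hat S(\xi)\bigr)=(T-a_1)\bigl(G_0\,S(\xi)\bigr)=T(G_0)(T-F_{u_x})S(\xi)=T(G_0)\hat S(\xi),\]
so that $L_{-1}\hat S(\xi)=(T-a_1)(D-F_{u_1})\hat S(\xi)-T(G_0)\hat S(\xi)=0$ for all $\xi\in\ker(T-1)$. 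Since $L_{-1}$ has the same form $TD-aD-bT-c$ as $L$, its Laplace $i$-invariants form the tail $G_1,G_2,\dots$ of the original sequence, and $\hat S$ has order at most $\sigma-1$, an induction on the order of the driver — which must eventually reach the case $\hat S=0$, or else encounter a vanishing invariant earlier — yields $G_{1+p'}=0$ for some $p'$ with $0\le p'\le\sigma-1$, and hence $G_p=0$ with $p=1+p'\le\sigma$.

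The step I expect to be the main obstacle is this last one: verifying that the lower-order operator $\hat S$, produced by pushing $(T-F_{u_x})$ through $S$, transfers to an $x$-symmetry driver of the Laplace-transformed operator $L_{-1}$. That is where the intertwining relations \eqref{il} and the commutativity $TD=DT$ must be combined precisely, and where one must keep careful track of the index shift in the Laplace $i$-invariant sequence along the induction. The remaining manipulations are elementary identities for differential operators in $D$.
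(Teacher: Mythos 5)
Your proof is correct, and it reaches the conclusion by a route that is organized differently from the paper's. The paper extracts the full chain of coefficient relations \eqref{ch1}--\eqref{ch3} from $L(S(f(x)))=0$ and then runs a single induction inside the fixed operator $L$: assuming $G_p\ne 0$ for all $p\le\sigma$, it proves the pair of identities \eqref{ind} for the quantities $A_p(\alpha_{\sigma-p})$ and derives a contradiction with $\alpha_\sigma\ne 0$. You instead prove a descent lemma at the operator level: pushing $T-F_{u_x}$ through $S$ on $\ker(T-1)$ (using $TD=DT$, which indeed holds by the cross-consistency of \eqref{uix}--\eqref{umx} and is already implicit in the factorization \eqref{fact}) produces $\hat S$ of order at most $\sigma-1$, and the second relation of \eqref{il} with $k=1$ shows that the Laplace $i$-transformation maps the pair $(L,S)$ to $(L_{-1},\hat S)$; induction on the order of the driver then finishes the argument, with the degenerate cases $G_0=0$ and $\hat S=0$ handled separately. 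The two arguments rest on the same identities --- your iterated leading coefficients $(T-a_p)\cdots(T-a_0)(\alpha_k)$ are exactly the paper's $A_p(\alpha_k)$, and your relation $(D-F_{u_1})\hat S(\xi)=G_0 S(\xi)$ is the first equation of \eqref{ind} in disguise --- but your packaging makes the mechanism (each Laplace transformation lowers the order of a driver by at least one) explicit, at the modest price of having to state the induction hypothesis for arbitrary operators of the form $TD-a[u]D-b[u]T-c[u]$, since $L_{-1}$ does not arise from an equation of the form \eqref{uix}, and of noting that the invariant sequence of $L_{-1}$ is the tail $G_1,G_2,\dots$, which the paper's formula $L_{-1}=(D-T(F_{u_1}))(T-a_1)-G_1$ supplies. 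The paper's version avoids this mild generalization by never leaving $L$, but in exchange carries the double family of identities \eqref{ind} through the induction.
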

\begin{proof}
Collecting the coefficients at $f^{(k)}$, $k=\overline{0,\sigma+1}$, in the equality $L(S(f(x))) = 0$ and taking the arbitrariness of $f(x)$ into account, we obtain the following chain of the relations:
\begin{eqnarray}
&&\left(T - F_{u_x}\right)(\alpha_\sigma) = 0, \label{ch1} \\
&&\left(T - F_{u_x}\right)(\alpha_{k-1}) + L(\alpha_k) = 0,\quad  1 \le k \le \sigma,  \label{ch2} \\ &&L(\alpha_0) = 0. \label{ch3}
\end{eqnarray}
Introducing $\alpha_{-1}=0$, we consider \eqref{ch3} as an extension of \eqref{ch2} for the case $k=0$.

Let $A_{-1}$ and $\bar{A}_0$ denote the identity mapping and the operators $A_{k}$ and $\bar{A}_q$ be defined by the recurrent formulae $A_k=(T-a_k) A_{k-1}$, $k \ge 0$, $\bar{A}_q=(T-a_q) \bar{A}_{q-1}$, $q > 0$. If $G_p \ne 0$ for all $p \le \sigma$, then we can prove the equalities
\begin{equation}\label{ind}
G_p A_{p-1}(\alpha_{\sigma - p}) = A_p(\alpha_{\sigma - p -1}) , \qquad A_p(\alpha_{\sigma - p})=0 
\end{equation}
by induction on $p$. Indeed, \eqref{ind} for $p=0$ follows from \eqref{ch1}, \eqref{ch2} for $k=\sigma$ and \eqref{fact}. If \eqref{ind} holds for some $p < \sigma$, then we obtain $A_{p+1}(\alpha_{\sigma - p-1})=0$ by applying $T-a_{p+1}$ to the first equation of \eqref{ind} and taking the second equations of \eqref{il}, \eqref{ind} into account. Since $\bar{A}_{p+1} L = L_{-(p+1)} A_p = (D-T^{p+1}(F_{u_1})) A_{p+1} - G_{p+1} A_p$ by \eqref{il}, the application of $\bar{A}_{p+1}$ to \eqref{ch2} for $k=\sigma - p - 1$ gives rise to the first equation of \eqref{ind} for $p+1$. 

Thus, \eqref{ind} is valid for all $p \le \sigma$ if we assume $G_p \ne 0$ for all $p \le \sigma$. In particular, $G_\sigma A_{\sigma-1}(\alpha_{0}) = 0$. The last equation and the first equality of \eqref{ind} (used as a recurrent formula) imply $A_{p-1}(\alpha_{\sigma - p})=0$ for all $p \le \sigma$. But this contradicts the condition $A_{-1}(\alpha_\sigma) = \alpha_\sigma \ne 0$.
\end{proof}

If a formal integral of \eqref{uix} is known, then we can try to obtain a `genuine' integral by applying the formal one to a symmetry of \eqref{uix}. (Theorem~\ref{msd} guarantees the existence of symmetries if formal integrals exist.) As an illustrative example of this, let us consider the equation 
\begin{equation}\label{sdliuv}
 \left(u_1\right)_x = u_x + \rme^{u}+\rme^{u_1}
\end{equation}
from \cite{AdS}. Since $H_1=G_1=0$, $\theta=1$ and $\tau=\rme^u/(\rme^{u_1}+\rme^u)$ for this equation, we can use \eqref{xfi}-\eqref{ifi} to construct its formal integrals
\begin{equation}\label{dlfi}
\mathcal{X}= D^2 - u_x D - \rme^{2u}, \quad \mathcal{I}= \frac{\rme^{u_1}}{\rme^{u_2}+\rme^{u_1}} T^2 - \left( \frac{\rme^{u_1}}{\rme^{u_1}+\rme^u} + \frac{\rme^{u_1}}{\rme^{u_2}+\rme^{u_1}} \right) T + \frac{\rme^{u_1}}{\rme^{u_1}+\rme^u}.
\end{equation}
The application of $\mathcal{X}$ to the symmetry $u_x$ gives us the $x$-integral $u_{xxx} - u_x u_{xx} -\rme^{2u} u_x$. As shown in \cite{foi}, this method allows us to prove the existence of integrals for entire subclasses of~\eqref{uix}. 

To obtain integrals, we can also employ the following way (which was used for equations \eqref{hyp} in \cite{ZhibSok}). Let $R$ be an $i$-symmetry driver and $\mathcal{I}$ be a formal $i$-integral of \eqref{uix}. Then the composition $\mathcal{I} R$ can be rewritten as $\sum_{k=0}^{m+r} w_k[u] T^k$ and maps $\ker D$ into $\ker D$ again. But this is possible only if $w_k[u] \in \ker D$. The same is true for $x$-symmetry drivers and formal $x$-integrals: coefficients of their compositions belong to $\ker (T-1)$. In the case of \eqref{sdliuv}, formulae \eqref{xsd}-\eqref{isd} gives us the symmetry drivers $S=D+u_x$, $R=(\rme^{u-u_{-1}}+1)T - \rme^{u_{-1}-u}(\rme^{u_{-1}-u_{-2}}+1)$, and
\[\mathcal{X} S=D^3+X D + \frac{D(X)}{2}, \qquad \mathcal{I} R = T^3 + (1-I) T^2 + T^{-1}(I) \left(\frac{1}{I} - 1 \right) T - \frac{T^{-2}(I)}{T^{-1}(I)}, \] 
where $X=2u_{xx}-u_x^2-\rme^{2u}$ and $I=(1+\rme^{u_1-u_2})(1+\rme^{u_1-u})$ are integrals of smallest order. 

The compositions of symmetry drivers and formal integrals are independent of the dynamical variables and generate no integrals for some equations~\eqref{uix}. But such independence for the compositions of \eqref{isd} with \eqref{ifi} and \eqref{xsd} with \eqref{xfi} seems to be a fairly restrictive condition and, likely, guarantees the existence of an additional symmetry which is mapped into `genuine' integrals by formal ones. For example, the coefficients of both formal integrals \eqref{xfi}-\eqref{ifi} and symmetry drivers \eqref{xsd}-\eqref{isd} depend on $x$ only for any equation  $(u_1)_x =a(x) u_x + b(x) u_1 + c(x) u$ such that $H_q=G_p=0$, but $u$ is a symmetry of this equation and the formal integrals map this symmetry into functions essentially depending on dynamical variables (i.e. into `genuine' integrals). Thus, it is probably that 
the methods of the previous two paragraphs complement each other and at least one of them can give us `genuine' integrals in any situation.

The proofs of the conditions \eqref{condx}-\eqref{condi} in \cite{AdS} also guarantee that the linearizations of $x$- and $i$-integrals are defined by formulae \eqref{xfi}-\eqref{ifi} if the orders of these integrals are $q+1$ and $p+1$, respectively (see the proof of Proposition~\ref{p2}). The integrals of smallest orders have orders $q+1$ and $p+1$ in all examples known to the author. This gives us an additional `heuristic' way to find integrals via \eqref{xfi}-\eqref{ifi}. For example, $X_*= 2 \mathcal{X}$ and $(\ln I)_*=-\mathcal{I}$, where $\mathcal{X}$, $\mathcal{I}$ are defined by \eqref{dlfi} and $X$, $I$ are the aforementioned integrals of smallest orders for \eqref{sdliuv}. It should be noted that $\ker (T - F_{u_x}^{-1})$ and $\ker (D + F_{u_1})$ are closed under multiplication by $x$- and $i$-integrals, respectively. Therefore, $\theta$ and $\tau$ in \eqref{xfi}-\eqref{ifi} are not uniquely defined. We obviously need to select $\theta$ and $\tau$ so that they do not depend on arguments other than arguments of integrals if we assume that \eqref{xfi}-\eqref{ifi} coincide with the linearizations of these integrals.

\section{Quad-graph equations}\label{deqs}
Let us introduce the notation $u_{p,q}:=u_{(i+p,j+q)}$, $u:=u_{0,0}=u_{(i,j)}$. According to it, the aforementioned discrete equation reads
\begin{equation}\label{uij}
u_{1,1}=F(u,u_{1,0},u_{0,1}).  
\end{equation}
Due to the assumption $F_u F_{u_{1,0}} F_{u_{0,1}} \ne 0$, we can rewrite \eqref{uij} in any of the following forms
\begin{eqnarray*}
&&u_{-1,-1}=\bar{F}(u,u_{-1,0},u_{0,-1}),  \label{umm}\\
&&u_{1,-1}=\hat{F}(u,u_{1,0},u_{0,-1}), \label{upm} \\
&&u_{-1,1}=\tilde{F}(u,u_{-1,0},u_{0,1}).  \label{ump}
\end{eqnarray*}
This allows us to express any `mixed shift' $u_{m,n}$, $n m \ne 0$, in terms of \emph{dynamical variables} $u_{k,0}$, $u_{0,l}$. The notation $g[u]$ indicates that the function $g$ depends on a finite number of the dynamical variables, while $f[i,j,u]$ designates that $f$ may explicitly depend on $i$, $j$ and a finite set of the dynamical variables which is the same for all $i$ and $j$. All functions are assumed to be analytical.

By $T_i$ and $T_j$ we denote the operators of the forward shifts in $i$ and $j$ by virtue of the equation~\eqref{uij}, while 
$T_i^{-1}$ and $T_j^{-1}$ denote the inverse (backward) shift operators. A shift operator with a superscript $k$ designates the $k$-fold application of this operator, and we let any operator with the zero superscript be equal to the identity mapping. The shift operators are defined by the following rules:
\[ 
T_i^k(f(i,j,a,b,\dots))=f(i+k,j,T_i^k(a),T_i^k(b),\dots) \quad \Rightarrow \quad T_i^k(u_{m,0})=u_{m+k,0}, 
\]
\[ 
T_j^k (f(i,j,a,b,\dots))=f(i,j+k,T_j^k (a),T_j^k (b),\dots) \quad \Rightarrow \quad T_j^k (u_{0,m})=u_{0,m+k},
\]
\[
T_i(u_{0,n})=T_j^{n-1}(F), \qquad T_i(u_{0,-n})=T_j^{1-n}(\hat{F}), 
\]
\[
T_j(u_{n,0})=T_i^{n-1}(F), \qquad T_j(u_{-n,0})=T_i^{1-n}(\tilde{F}), 
\]
\[
T_i^{-1}(u_{0,n})=T_j^{n-1}(\tilde{F}), \qquad T_i^{-1}(u_{0,-n})=T_j^{1-n}(\bar{F}), 
\]
\[
T_j^{-1}(u_{n,0})=T_i^{n-1}(\hat{F}), \qquad T_j^{-1}(u_{-n,0})=T_i^{1-n}(\bar{F}) 
\]
for any function $f$ and any integers $k$, $m$ and $n>0$. 
\begin{definition} A function $f[i,j,u]$ is called a \emph{symmetry} of equation
\eqref{uij} if the relation $L(f)=0$ holds for all $i$, $j$, where
\begin{equation}\label{lopd}
L = T_i T_j - F_{u_{1,0}} T_i - F_{u_{0,1}} T_j - F_u.
\end{equation}
Operators $R=\sum_{k=0}^{r} \lambda_k[u] T_i^k$ and $\bar{R}=\sum_{k=0}^{\bar{r}} \bar{\lambda}_k[u] T_j^k$  are said to be $i$- and $j$-\emph{symmetry drivers}, respectively, if $\lambda_{r} \bar{\lambda}_{\bar{r}} \ne 0$, $r, \bar{r} \ge 0$ and $R(\xi)$, $\bar{R}(\eta)$ are symmetries of \eqref{uij} for any $\xi[i,j,u] \in \ker (T_j-1)$ and any $\eta[i,j,u] \in \ker (T_i-1)$.  
\end{definition} 
\begin{definition} An equation of the form \eqref{uij} is called \emph{Darboux integrable} if there exist functions $I(u_{\ell,0},u_{\ell+1,0},\dots,u_{\ell +m,0})$ and $J(u_{0,\bar{\ell}},\dots,u_{0,\bar{\ell} + n})$ such that $I_{u_{\ell,0}} I_{u_{\ell+m,0}} \ne 0$, $J_{u_{0,\bar{\ell}}} J_{u_{0,\bar{\ell}+n}} \ne 0$ and the equalities $T_j(I)=I$, $T_i(J)=J$ hold. The functions $I$ and $J$ are respectively called an $i$-\emph{integral of order} $m$ and a $j$-\emph{integral of order} $n$ for the equation \eqref{uij}.

We say that operators $\mathcal{I}=\sum_{k=0}^{m} \mu_k[u] T_i^k$ and $\mathcal{J}=\sum_{k=0}^{n} \beta_k[u] T_j^k$ are \emph{formal} $i$- and $j$-\emph{integrals}, respectively, if $\mu_m \ne 0$, $\beta_n \ne 0$, $m, n > 0$ and the operator identities $(T_j - 1) \mathcal{I} = \sum_{k=0}^{m-1} \nu_k[u] T_i^k L$, $(T_i-1) \mathcal{J} = \sum_{k=0}^{n-1} \gamma_k[u] T_j^k L$ hold for some functions $\nu_k[u]$, $\gamma_k[u]$. 
\end{definition}

Since $T_i^{-\ell}$ and $T_j^{-\bar{\ell}}$ respectively map $i$- and $j$-integrals into $i$- and $j$-integrals again, we can assume $\ell = \bar{\ell} = 0$ without loss of generality. Under this assumption, $I_{*}= \sum_{k=0}^m I_{u_{k,0}} T_i^k$ and  $J_{*}= \sum_{k=0}^n J_{u_{0,k}} T_j^k$ are formal $i$- and $j$-integrals by Lemma~3 in \cite{Stn}. 

According to \cite{AdS}, equation \eqref{uij} admits both $i$- and $j$-symmetry drivers if this equation possesses both $i$- and $j$-integrals. The proof of this statement was omitted in \cite{AdS} because it is very similar to the proof of the analogous statement for the semi-discrete equations~\eqref{uix}. We give this proof below for the reader convenience and to demonstrate that the proof remains valid for formal integrals too. For further reasonings, we again need to introduce Laplace invariants.

As in the differential-difference case, the operator \eqref{lopd} can be represented in the form
\[ L = \left(T_i-F_{u_{0,1}}\right) \left(T_j-T_i^{-1}\left(F_{u_{1,0}}\right)\right)-H_{0} = 
\left(T_j-F_{u_{1,0}}\right) \left(T_i-T_j^{-1}\left(F_{u_{0,1}}\right)\right)-G_{0}, \]
where $H_0= F_u + F_{u_{0,1}} T_i^{-1}\left(F_{u_{1,0}}\right)$, $G_0= F_u + F_{u_{1,0}} T_j^{-1}\left(F_{u_{0,1}}\right)$. Using $L_0:=L$ as a starting term and the operator equality
\begin{equation}\label{xl}
(T_j-b_{k+1}) L_{k} = L_{k+1} (T_j - T_i^{-1}(b_k))
\end{equation}
as a defining relation for the sequence of the operators $L_k=(T_i - T_j^k(F_{u_{0,1}}))(T_j-T_i^{-1}(b_k)) - H_k=$ $=(T_j-b_k) (T_i - T_j^{k-1}(F_{u_{0,1}})) - T_j(H_{k-1})$, we obtain
\[ b_{k+1}= \frac{T_i^{-1}(b_k) T_j (H_k)}{H_k},\quad b_0=F_{u_{1,0}}, \quad H_{k+1}= T_j(H_k) - T_j^k\left(F_{u_{0,1}}\right) b_{k+1} + T_j^{k+1}\left(F_{u_{0,1}}\right) T_i^{-1}(b_{k+1}). \]
The functions $H_k$ are called \emph{Laplace $j$-invariants} of \eqref{uij}. Laplace $i$-invariants $G_k$ are defined analogously. It is convenient for further reasonings to introduce the difference operators
\begin{equation}\label{Bj} 
B_{-1}=\bar{B}_0=1, \qquad B_k= (T_j - T_i^{-1}(b_k)) B_{k-1}, \quad \bar{B}_{k+1}= (T_j - b_{k+1}) \bar{B}_k, \quad k \ge 0.
\end{equation}

\begin{proposition}[\cite{AdS}]\label{p2}
Let equation~\eqref{uij} admits a formal $j$-integral $\mathcal{J}=\sum_{k=0}^{n} \beta_k[u] T_j^k$. Then $H_p=0$ for some $p < n$ and $T_j^{1-n}(\beta_n) \in \ker(T_i - F_{u_{0,1}}^{-1})$. If, in addition, $p=n-1$, then $\mathcal{J}= \beta_n B_p$, where $B_p$ is defined by \eqref{Bj}.
\end{proposition}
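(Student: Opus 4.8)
The plan is to mimic, in the quad-graph setting, the chain-of-relations argument used in the proof of the Proposition for the semi-discrete equation~\eqref{uix}, but now driving the induction by the operator identity~\eqref{xl} and the difference operators $B_k$ from~\eqref{Bj}. First I would expand the defining identity $(T_j-1)\mathcal{J}=\sum_{k=0}^{n-1}\gamma_k[u]T_j^kL$ and compare coefficients at the powers $T_i^0,\dots$ — but since the unknowns here are shifted in $j$, it is cleaner to apply both sides to an arbitrary $\eta[i,j,u]\in\ker(T_i-1)$, i.e. to an arbitrary function $\eta(j)$ of $j$ alone (which lies in $\ker(T_i-1)$ for any equation~\eqref{uij}), so that the right-hand side vanishes identically. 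Collecting the coefficients of $\eta(j+k)$ for $k=0,\dots,n+1$ in $(T_j-1)(\mathcal{J}(\eta(j)))=0$ would, after using the first factorization $L=(T_i-F_{u_{0,1}})(T_j-T_i^{-1}(F_{u_{1,0}}))-H_0$, yield a chain analogous to~\eqref{ch1}--\eqref{ch3}: a top relation $(T_i-F_{u_{0,1}})(\beta_n)=0$ (equivalently $\beta_n\in\ker(T_i-F_{u_{0,1}})$, whence $T_j^{1-n}(\beta_n)\in\ker(T_i-F_{u_{0,1}}^{-1})$ after pulling the shift through, exactly the conjunct we must prove), intermediate relations tying $\beta_{k-1}$ to $L(\beta_k)$ via $T_i-F_{u_{0,1}}$, and a bottom relation $L(\beta_0)=0$, with the convention $\beta_{-1}=0$.

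Next I would introduce the composite operators $\bar B_k$ and $B_k$ from~\eqref{Bj} and prove by induction on $p$ the analogue of~\eqref{ind}, namely that, assuming $H_0,\dots,H_{p-1}$ are all nonzero, one has
\begin{equation}\label{indd}
H_p\,\bar B_{p-1}(\beta_{n-p}) = B_p(\beta_{n-p-1}),\qquad B_p(\beta_{n-p})=0.
\end{equation}
The base case $p=0$ follows from the top relation, the first intermediate relation (with $k=n$), and the factorization of $L$. For the inductive step I would apply $T_j-T_i^{-1}(b_{p+1})$ to the first equation of~\eqref{indd} and use the companion identity $(T_j-T_i^{-1}(b_{p+1}))T_i^{-1}(b_p) = T_i^{-1}(b_p)(T_j-T_i^{-1}(b_p))$ — or whatever the precise second-equality analogue of~\eqref{il}/\eqref{xl} is for the $b$-sequence — to get $B_{p+1}(\beta_{n-p-1})=0$; then, since $\bar B_{p+1}L = L_{p+1}B_p = (T_i-T_j^{p+1}(F_{u_{0,1}}))B_{p+1} - H_{p+1}B_p$ by~\eqref{xl}, applying $\bar B_{p+1}$ to the intermediate relation at level $k=n-p-1$ produces the first equation of~\eqref{indd} at step $p+1$. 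If no $H_k$ vanished for $k\le n-1$, then~\eqref{indd} would hold for all $p\le n-1$; in particular $H_{n-1}\bar B_{n-2}(\beta_1)=B_{n-1}(\beta_0)$ together with $L(\beta_0)=0$ and $B_{n-1}(\beta_0)=0$, and then the first equations of~\eqref{indd} read off recursively $\bar B_{p-1}(\beta_{n-p})=0$ for all $p\le n-1$, down to $\bar B_0(\beta_n)=\beta_n=0$, contradicting $\beta_n\ne0$. Hence $H_p=0$ for some $p<n$.

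For the final clause, suppose $p=n-1$, i.e. $H_{n-1}=0$ is the first vanishing invariant. Running~\eqref{indd} only up to step $p=n-1$ is legitimate (the induction used $H_0,\dots,H_{n-2}\ne0$ only), and at the last step the identity $\bar B_{n-1}L = L_{n-1}B_{n-2} = (T_i-T_j^{n-1}(F_{u_{0,1}}))B_{n-1} - H_{n-1}B_{n-2}$ collapses, because $H_{n-1}=0$, to $\bar B_{n-1}L = (T_i-T_j^{n-1}(F_{u_{0,1}}))B_{n-1}$; applying this to the relation at $k=0$ (which is $L(\beta_0)=0$, since $\beta_{-1}=0$) gives $(T_i-T_j^{n-1}(F_{u_{0,1}}))B_{n-1}(\beta_0)=0$. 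Combined with the recursion $H_k\bar B_{k-1}(\beta_{n-k})=B_k(\beta_{n-k-1})$ for $k\le n-1$, which still holds, one shows $\mathcal J$ and $\beta_n B_{n-1}$ have the same action on every $\eta(j)$ and the same top coefficient $\beta_n$, and since both are operators of order $n$ with identical coefficient relations they coincide; I would write this as $\mathcal J=\beta_n B_p$ with $p=n-1$. The main obstacle I anticipate is bookkeeping: getting the exact form of the ``second equality'' commutation identity for the $b_k$-sequence (the analogue of the second identity in~\eqref{il}) right, and keeping the $j$-shifts on $\beta_n$ straight when converting $\beta_n\in\ker(T_i-F_{u_{0,1}})$ into $T_j^{1-n}(\beta_n)\in\ker(T_i-F_{u_{0,1}}^{-1})$; everything else is a transcription of the already-proven semi-discrete Proposition.
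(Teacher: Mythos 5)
Your plan breaks down at its very first step. The defining property of a formal $j$-integral is the operator identity $(T_i-1)\mathcal{J}=\sum_{k=0}^{n-1}\gamma_k[u]\,T_j^kL$ (note: $T_i-1$, not $T_j-1$ as you wrote), and there is no class of ``arbitrary functions'' you can feed into it that kills the right-hand side: a function $\eta(j)$ of $j$ alone does lie in $\ker(T_i-1)$, but it is not a symmetry of \eqref{uij}, since $L(\eta(j))=(1-F_{u_{0,1}})\,\eta(j+1)-(F_u+F_{u_{1,0}})\,\eta(j)\ne 0$ in general. The ``collect coefficients of the arbitrary function'' trick works for symmetry drivers (this is exactly how the paper proves the semi-discrete proposition and Proposition~\ref{t3}), because a driver by definition sends every element of a kernel to a symmetry; a formal integral has no such input--output property on arbitrary functions, so applying the identity to $\eta(j)$ either leaves the unknown $\gamma_k$ in every relation or, if you simply discard the right-hand side as you propose, produces the false relations $T_i(\beta_k)=\beta_k$ rather than the chain you announce. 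A symptom that your chain cannot be right: the ``top relation'' $(T_i-F_{u_{0,1}})(\beta_n)=0$ would give $T_j^{1-n}(\beta_n)\in\ker\bigl(T_i-T_j^{1-n}(F_{u_{0,1}})\bigr)$, not the required $\ker\bigl(T_i-F_{u_{0,1}}^{-1}\bigr)$; the correct relation, obtained in the paper by collecting the coefficient of $T_j^n$ in the operator identity via $T_iT_j^k=T_j^{k-1}L+T_j^{k-1}(F_{u_{0,1}})\,T_j^k+\dots$, is $T_j^{n-1}(F_{u_{0,1}})\,T_i(\beta_n)=\beta_n$.

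The paper's proof therefore stays at the operator level: assuming $H_k\ne0$ for $k<n-1$ (so that the $b_k$, and hence the $B_k$ of \eqref{Bj}, exist), it rewrites $\mathcal{J}=\sum_{k=0}^{n}\tilde\beta_k B_{k-1}$ with $\tilde\beta_n=\beta_n$, substitutes this into the defining identity, and uses \eqref{struc}, i.e. $T_iB_k=T_j^k(F_{u_{0,1}})B_k+H_kB_{k-1}+(\text{right multiples of }L)$, to compare coefficients modulo such multiples of $L$; this eliminates the $\gamma_k$ legitimately and yields the upward recursion $T_i(\tilde\beta_0)=0$, $T_i(\tilde\beta_1)H_0=0$, $T_i(\tilde\beta_{k+1})H_k=\bigl(1-T_j^{k-1}(F_{u_{0,1}})T_i\bigr)(\tilde\beta_k)$, which forces all $\tilde\beta_k=0$ if no $H_k$ vanishes (contradicting $\beta_n\ne0$) and, when $p=n-1$, gives $\mathcal{J}=\beta_nB_p$ at once because all lower $\tilde\beta_k$ vanish. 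Your downward induction modelled on \eqref{ind} has no starting point once the initial chain collapses, and your argument for the final clause (``same action on every $\eta(j)$'') rests on the same unestablished relations. So, as it stands, the proposal has a genuine gap and does not prove the proposition.
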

\begin{proof} Equation~\eqref{xl} implies $L_k B_{k-1} = \bar{B}_k L$ and
\begin{equation}\label{struc}
T_i B_k = L_k B_{k-1} + T_j^k(F_{u_{0,1}}) B_k +  H_k B_{k-1} = T_j^k(F_{u_{0,1}}) B_k +  H_k B_{k-1} + \dots,\quad k \ge 0,
\end{equation}
where the dots denote terms of the form $\zeta_\ell [u] T_j^{\ell} L$. If $H_k \ne 0$ for all $k<n-1$, then $\mathcal{J}$ can be rewritten as $\sum_{k=0}^{n} \tilde{\beta}_k[u] B_{k-1}$, $\tilde{\beta}_n = \beta_n$. Substituting this into the defining relation of formal $j$-integrals, taking \eqref{struc} into account and collecting the coefficients at $T_i$ and $B_{k-1}$, we obtain 
\[ T_i(\tilde{\beta}_0) = 0, \quad T_i(\tilde{\beta}_1) H_0=0, \qquad T_i(\tilde{\beta}_{k+1}) H_k = (1 -T_j^{k-1}(F_{u_{0,1}}) T_i)(\tilde{\beta}_k), \quad 1 \le k < n.  \]
The above relations imply that $\tilde{\beta}_k=0$ for all $k \le n$ if $H_k \ne 0$ for all $k<n$. But this contradicts the condition $\beta_n \ne 0$ and, hence, $H_p=0$ for some $p<n$. If $p=n-1$, then $\tilde{\beta}_k=0$ for all $k \le n-1$ and $\mathcal{J}= \beta_n B_p$. Since $T_i T_j^k = T_j^{k-1} L + T_j^{k-1} (F_{u_{0,1}}) T_j^k\,+$ terms without $T_j^k$, we obtain $(T_j^{n-1}(F_{u_{0,1}}) T_i - 1)(\beta_n)  = 0$ by collecting coefficients of $T_j^n$ in the defining relation for formal $j$-integrals. 
\end{proof}
The converse statement is also true. 
\begin{proposition}\label{t2} 
Let a Laplace $j$-invariant $H_p$ of equation \eqref{uij} be equal to zero and there exist a non-zero function $\theta[u] \in \ker(T_i - F_{u_{0,1}}^{-1})$. Then $T_j^p(\theta) B_p$, where $B_p$ is defined by \eqref{Bj}, is a formal $j$-integral of \eqref{uij}.
\end{proposition}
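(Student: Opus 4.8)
The plan is to verify directly that $\mathcal{J}:=T_j^p(\theta)\,B_p$ meets the two requirements in the definition of a formal $j$-integral, reusing the operator calculus already set up for Proposition~\ref{p2}. Put $n:=p+1$. Since $B_p=(T_j-T_i^{-1}(b_p))\cdots(T_j-T_i^{-1}(b_0))$ is a difference operator of order $p+1$ in $T_j$ with leading coefficient $1$, the operator $\mathcal{J}$ has the form $\sum_{k=0}^{n}\beta_k[u]\,T_j^k$ with leading coefficient $\beta_n=T_j^p(\theta)$; because $\theta\ne0$ and $T_j$ is invertible, $\beta_n\ne0$, and $n>0$ since $p\ge0$. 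So the only thing left is to exhibit functions $\gamma_k[u]$ with $(T_i-1)\mathcal{J}=\sum_{k=0}^{n-1}\gamma_k[u]\,T_j^kL$.

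The main input is relation~\eqref{struc} taken at $k=p$; combined with $L_pB_{p-1}=\bar{B}_pL$ and the hypothesis $H_p=0$ it gives
\[ T_iB_p=\bar{B}_pL+T_j^p\!\left(F_{u_{0,1}}\right)B_p . \]
(This also follows at once from $L_p=\bigl(T_i-T_j^p(F_{u_{0,1}})\bigr)\bigl(T_j-T_i^{-1}(b_p)\bigr)-H_p$ and $B_p=\bigl(T_j-T_i^{-1}(b_p)\bigr)B_{p-1}$.) I would then use that $T_i$ and $T_j$ commute on functions of the dynamical variables (which follows from the defining rules for $T_i$ and $T_j$) together with $\theta\in\ker(T_i-F_{u_{0,1}}^{-1})$, i.e. $T_i(\theta)=F_{u_{0,1}}^{-1}\theta$, to obtain $T_i\bigl(T_j^p(\theta)\bigr)=T_j^p\bigl(T_i(\theta)\bigr)=\bigl(T_j^p(F_{u_{0,1}})\bigr)^{-1}T_j^p(\theta)$. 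Composing $T_i$ with $\mathcal{J}$ and inserting the displayed identity yields
\[ T_i\mathcal{J}=\bigl(T_j^p(F_{u_{0,1}})\bigr)^{-1}T_j^p(\theta)\,\bar{B}_pL+T_j^p(\theta)B_p=\bigl(T_j^p(F_{u_{0,1}})\bigr)^{-1}T_j^p(\theta)\,\bar{B}_pL+\mathcal{J}, \]
whence $(T_i-1)\mathcal{J}=\bigl(T_j^p(F_{u_{0,1}})\bigr)^{-1}T_j^p(\theta)\,\bar{B}_pL$.

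Finally, $\bar{B}_p=(T_j-b_p)\cdots(T_j-b_1)$ is a difference operator of order $p$ in $T_j$, so writing $\bar{B}_p=\sum_{k=0}^{p}c_k[u]\,T_j^k$ gives $(T_i-1)\mathcal{J}=\sum_{k=0}^{p}\gamma_k[u]\,T_j^kL$ with $\gamma_k:=\bigl(T_j^p(F_{u_{0,1}})\bigr)^{-1}T_j^p(\theta)\,c_k$. Since $p=n-1$, this is exactly the defining relation of a formal $j$-integral, which completes the argument. The computation is essentially routine; the only points demanding care are keeping track of the convention that $Pg$ denotes composition of the operator $P$ with multiplication by $g$, and confirming that the leading coefficient $T_j^p(\theta)$ does not vanish — so I do not expect a genuine obstacle, the substantive content already being packaged in~\eqref{struc} and in the Laplace $j$-transformation identity~\eqref{xl}.
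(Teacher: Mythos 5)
Your proof is correct and follows essentially the same route as the paper's: both arguments reduce $H_p=0$ together with \eqref{xl} to the identity $(T_i-T_j^p(F_{u_{0,1}}))B_p=\bar{B}_pL$ and then use $T_i(\theta)=F_{u_{0,1}}^{-1}\theta$ to convert it into $(T_i-1)\,T_j^p(\theta)B_p=T_j^p(\theta F_{u_{0,1}}^{-1})\bar{B}_pL$, your computation of $T_i\mathcal{J}$ being just a rearranged form of the paper's operator identity $(T_i-1)\,T_j^p(\theta)=T_j^p(\theta F^{-1}_{u_{0,1}})(T_i-T_j^p(F_{u_{0,1}}))$. Your extra explicit checks (the leading coefficient $T_j^p(\theta)\ne 0$, the expansion of $\bar{B}_p$ in powers of $T_j$) are fine and merely make precise what the paper leaves implicit.
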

\begin{proof}
The equalities $H_p=0$ and \eqref{xl} imply
\begin{equation}\label{pr1}
(T_i-T_j^p(F_{u_{0,1}}))B_p = L_p B_{p-1} = \bar{B}_p L.
\end{equation}
Since $T_i(\theta)=F_{u_{0,1}}^{-1} \theta$, we have  
\[ (T_i-1) T_j^p(\theta) = T_j^p(\theta) (T_j^p(F^{-1}_{u_{0,1}}) T_i - 1)= T_j^p(\theta F^{-1}_{u_{0,1}}) (T_i- T_j^p(F_{u_{0,1}})). \]
Multiplying \eqref{pr1} by $T_j^p( \theta F^{-1}_{u_{0,1}})$, we therefore obtain 
$(T_i-1) T_j^p(\theta) B_p = T_j^p(\theta F^{-1}_{u_{0,1}}) \bar{B}_p L$. 
\end{proof}

\begin{proposition}[\cite{AdS}]\label{p3}
Let equation~\eqref{uij} admit a non-zero function $\vartheta[u] \in \ker (T_j - F_{u_{1,0}}^{-1})$ and $H_p=0$ for some $p \ge 0$. Then this equation possesses the $i$-symmetry driver
\[ R = \begin{cases}
\frac{1}{H_0} \left(T_i - F_{u_{0,1}}\right) \dots \frac{1}{H_{p-1}} \left(T_i - T_j^{p-1} (F_{u_{0,1}})\right) \frac{T_i^{-1} (H_{p-1}) \dots T_i^{-p}(H_0)}{T_i^{-(p+1)}(\vartheta)} &\text{if $p>0$,} \\
T_i^{-1}(\vartheta^{-1}) &\text{if $p=0$.}
\end{cases} \]
\end{proposition}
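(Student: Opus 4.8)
The plan is to verify the two conditions in the definition of an $i$-symmetry driver: that the leading coefficient of $R$ is non-zero, and that $L\bigl(R(\xi)\bigr)=0$ for every $\xi[i,j,u]\in\ker(T_j-1)$ --- this is the kernel on which an operator built from powers of $T_i$ must be tested. I take $p$ to be minimal with $H_p=0$, so that $H_0,\dots,H_{p-1}\neq 0$ and $b_1,\dots,b_p$ are defined, and I abbreviate $C_k:=T_i-T_j^{k}(F_{u_{0,1}})$; then $L_k=C_k\bigl(T_j-T_i^{-1}(b_k)\bigr)-H_k$ and $R=H_0^{-1}C_0\,H_1^{-1}C_1\cdots H_{p-1}^{-1}C_{p-1}\,\rho$, where $\rho:=\bigl(\prod_{s=1}^{p}T_i^{-s}(H_{p-s})\bigr)\big/T_i^{-(p+1)}(\vartheta)$ and the empty product is the identity (this also subsumes the degenerate case $p=0$, in which $R=\rho=T_i^{-1}(\vartheta^{-1})$).

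The heart of the argument is the operator identity
\[ L_k\,\frac{1}{H_k}\,C_k \;=\; C_k\,\frac{1}{T_j(H_k)}\,L_{k+1},\qquad 0\le k\le p-1, \]
a counterpart of \eqref{xl} adapted to the factorization $L_k=C_k\bigl(T_j-T_i^{-1}(b_k)\bigr)-H_k$. I would establish it by expanding the left-hand side, pushing $T_j$ past $H_k^{-1}$ and $C_k$ by means of the relation $T_j C_k=C_{k+1}T_j$ (a consequence of $T_iT_j=T_jT_i$), then rewriting the resulting operator $C_{k+1}T_j$ through the second factorization $L_{k+1}=(T_j-b_{k+1})C_k-T_j(H_k)$, and finally cancelling the two leftover multiples of $C_k$ by means of the recurrence $b_{k+1}=T_i^{-1}(b_k)T_j(H_k)/H_k$. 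This bookkeeping with non-commuting shift operators is the step I expect to require the most care, though it is purely mechanical once the factorizations and recurrences are lined up.

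Applying the identity successively for $k=0,1,\dots,p-1$ telescopes $LR$ into
\[ LR \;=\; C_0\,\frac{1}{T_j(H_0)}\,C_1\,\frac{1}{T_j(H_1)}\cdots C_{p-1}\,\frac{1}{T_j(H_{p-1})}\,L_p\,\rho . \]
Since $H_p=0$, the factorization of $L_p$ collapses to $L_p=C_p\bigl(T_j-T_i^{-1}(b_p)\bigr)$, so that $L_p\rho(\xi)=C_p\bigl(\bigl(T_j(\rho)-T_i^{-1}(b_p)\rho\bigr)\xi\bigr)$ for any $\xi\in\ker(T_j-1)$, and it remains only to check that $\rho\in\ker\bigl(T_j-T_i^{-1}(b_p)\bigr)$. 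This reduces to a telescoping computation of $T_j(\rho)/\rho$: the hypothesis on $\vartheta$ gives $T_j(\vartheta)/\vartheta=F_{u_{1,0}}^{-1}=b_0^{-1}$, the $b$-recurrence gives $T_j(H_k)/H_k=b_{k+1}/T_i^{-1}(b_k)$, and substituting these into $T_j(\rho)/\rho=\bigl(\prod_{s=1}^{p}T_i^{-s}\bigl(T_j(H_{p-s})/H_{p-s}\bigr)\bigr)\big/T_i^{-(p+1)}\bigl(T_j(\vartheta)/\vartheta\bigr)$ produces a telescoping product in which everything cancels except $T_i^{-1}(b_p)$. Hence $L_p\rho(\xi)=0$, so $LR(\xi)=0$ and $R(\xi)$ is a symmetry for every $\xi\in\ker(T_j-1)$. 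Finally, $R$ is the composition of the first-order operators $H_k^{-1}C_k$ ($k=0,\dots,p-1$), each with the non-zero coefficient $H_k^{-1}$ at $T_i$, with the non-vanishing multiplication by $\rho$, so its coefficient at $T_i^{p}$ is $T_i^{p}(\rho)\big/\prod_{k=0}^{p-1}T_i^{k}(H_k)\neq 0$; thus $R$ is an $i$-symmetry driver of \eqref{uij}, as claimed.
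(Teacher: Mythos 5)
Your proposal is correct and follows essentially the same route as the paper: the intertwining identity $L_k H_k^{-1}\bigl(T_i - T_j^k(F_{u_{0,1}})\bigr) = \bigl(T_i - T_j^k(F_{u_{0,1}})\bigr)T_j(H_k^{-1})L_{k+1}$ is exactly the paper's key step, the factorization $L_p = C_p\bigl(T_j - T_i^{-1}(b_p)\bigr)$ when $H_p=0$ plays the same role, and your telescoping computation of $T_j(\rho)/\rho = T_i^{-1}(b_p)$ is just an unpacked form of the paper's chained conjugation identities $(T_j-b_k)H_{k-1}=T_j(H_{k-1})\bigl(T_j-T_i^{-1}(b_{k-1})\bigr)$ and $(T_j-b_0)\vartheta^{-1}=\vartheta^{-1}b_0(T_j-1)$, which show that multiplication by $\rho$ maps $\ker(T_j-1)$ into $\ker\bigl(T_j-T_i^{-1}(b_p)\bigr)\subset\ker L_p$.
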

\begin{proof} It is easy to check that $L_k H_k^{-1} (T_i - T_j^k (F_{u_{0,1}}))= (T_i - T_j^k (F_{u_{0,1}})) T_j(H_k^{-1}) L_{k+1}$ for all $k \ge 0$ and $H_k^{-1} (T_i - T_j^k (F_{u_{0,1}}))$ therefore maps $\ker L_{k+1}$ into $\ker L_k$. Any element of $\ker (T_j-T_i^{-1}(b_p))$ belongs to $\ker L_p$ if $H_p = 0$. Taking  the equalities $(T_j-b_k) H_{k-1} = T_j(H_{k-1}) (T_j - T_j^{-1}(b_{k-1}))$ and $(T_j - b_0) \vartheta^{-1} =\vartheta^{-1} b_0 (T_j-1)$ into account, we obtain
\[ \left(T_j - T_i^{-1} (b_p)\right) \frac{T_i^{-1} (H_{p-1}) \dots T_i^{-p}(H_0)}{T_i^{-(p+1)}(\vartheta)} =  \frac{T_j \left(T_i^{-1} (H_{p-1}) \dots T_i^{-p}(H_0)\right) T_i^{-(p+1)}(b_0)}{T_i^{-(p+1)}(\vartheta)} (T_j-1).\]  
Thus, the multiplication by $T_i^{-1} (H_{p-1}) \dots T_i^{-p}(H_0) T_i^{-(p+1)}(\vartheta^{-1})$  (by $T_i^{-1}(\vartheta^{-1})$ if $p=0$) maps $\ker (T_j-1)$ into $\ker (T_j-T_i^{-1}(b_p)) \subset \ker L_p$ and $R$ is a symmetry driver of \eqref{uij}.
\end{proof}
Again, the converse statement to Proposition~\ref{p3} is also true. 
\begin{proposition}\label{t3}
If equation \eqref{uij} admits an $i$-symmetry driver $R=\sum_{k=0}^{r} \lambda_k[u] T_i^k$, then $H_p=0$ for some $p \le r$ and $T_i(\lambda_r^{-1}) \in \ker (T_j - F_{u_{1,0}}^{-1})$. 
\end{proposition}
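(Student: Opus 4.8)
\emph{Approach.} The plan is to adapt the argument used for the corresponding statement about equation~\eqref{uix} (the Proposition stated after Theorem~\ref{msd}), with the Laplace $i$-transformation of that equation replaced by the Laplace $j$-transformation of \eqref{uij}. Since every function $\xi(i)$ of the single variable $i$ satisfies $T_j(\xi)=\xi$, the operator $R$ sends each such $\xi$ into a symmetry, so $L\bigl(R(\xi)\bigr)=0$. Writing $R(\xi)=\sum_{k=0}^{r}\lambda_k\,\xi(i+k)$, applying \eqref{lopd}, collecting the coefficient of $\xi(i+s)$ for $s=\overline{0,r+1}$, and using the arbitrariness of $\xi$, one obtains (with $\lambda_{-1}=\lambda_{r+1}=0$) the chain
\begin{equation}\label{pch}
T_iT_j(\lambda_{s-1})-F_{u_{1,0}}T_i(\lambda_{s-1})-F_{u_{0,1}}T_j(\lambda_s)-F_u\lambda_s=0 .
\end{equation}
For $s=r+1$ this reads $T_iT_j(\lambda_r)=F_{u_{1,0}}T_i(\lambda_r)$, i.e. $T_j(\lambda_r)=T_i^{-1}(F_{u_{1,0}})\lambda_r$; passing to reciprocals and applying $T_i$ (which commutes with $T_j$) gives $T_j\bigl(T_i(\lambda_r^{-1})\bigr)=F_{u_{1,0}}^{-1}\,T_i(\lambda_r^{-1})$, which is the second assertion.

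To prove that $H_p=0$ for some $p\le r$, suppose on the contrary that $H_0,\dots,H_r$ are all nonzero, so that $b_1,\dots,b_r$ and the operators $L_k$, $B_k$, $\bar B_k$ with $k\le r$ are defined. Because $B_{p-1}$ is an operator in $T_j$ alone while $\xi$ does not depend on $j$, one has $B_{p-1}\bigl(R(\xi)\bigr)=\sum_k B_{p-1}(\lambda_k)\,\xi(i+k)$, and similarly with $B_p$ in place of $B_{p-1}$. Applying $\bar B_p$ to $L(R(\xi))=0$, using the identity $\bar B_pL=L_pB_{p-1}=\bigl(T_i-T_j^p(F_{u_{0,1}})\bigr)B_p-H_pB_{p-1}$ (which combines $L_pB_{p-1}=\bar B_pL$ with \eqref{struc}, both obtained in the proof of Proposition~\ref{p2}), and again collecting the coefficient of $\xi(i+s)$, one gets, for every $p=\overline{0,r}$, the ``level-$p$'' chain
\begin{equation}\label{pchp}
T_i\bigl(B_p(\lambda_{s-1})\bigr)=T_j^p(F_{u_{0,1}})\,B_p(\lambda_s)+H_p\,B_{p-1}(\lambda_s),\qquad s=\overline{0,r+1};
\end{equation}
for $p=0$ this is \eqref{pch} rewritten by means of the factorization $L=\bigl(T_i-F_{u_{0,1}}\bigr)B_0-H_0$ and of $B_{-1}=1$.

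From \eqref{pchp} I prove, by induction on $p=\overline{0,r}$, the two equalities
\begin{equation}\label{pind}
B_p(\lambda_{r-p})=0,\qquad\qquad T_i\bigl(B_p(\lambda_{r-p-1})\bigr)=H_p\,B_{p-1}(\lambda_{r-p}) .
\end{equation}
Once the first equality of \eqref{pind} is known, it turns the instance $s=r-p$ of \eqref{pchp} into the second equality of \eqref{pind}; so it is enough to obtain the first equalities, invoking the corresponding second ones as needed. For $p=0$ the first equality of \eqref{pind} is the instance $s=r+1$ of \eqref{pchp}. For the step $p\to p+1$, I substitute into $B_{p+1}(\lambda_{r-p-1})=\bigl(T_j-T_i^{-1}(b_{p+1})\bigr)B_p(\lambda_{r-p-1})$ (see \eqref{Bj}) the value of $B_p(\lambda_{r-p-1})$ extracted from the second equality of \eqref{pind} for $p$, and then use the first equality of \eqref{pind} for $p$ in the form $T_j\bigl(B_{p-1}(\lambda_{r-p})\bigr)=T_i^{-1}(b_p)\,B_{p-1}(\lambda_{r-p})$ together with the recurrence $T_i^{-1}(b_p)T_j(H_p)=b_{p+1}H_p$; everything collapses and $B_{p+1}(\lambda_{r-p-1})=0$, i.e. the first equality of \eqref{pind} holds for $p+1$. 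Finally, reading the second equalities of \eqref{pind} from $p=r$ downwards — at $p=r$ the left-hand side equals $T_i\bigl(B_r(\lambda_{-1})\bigr)=0$, so $B_{r-1}(\lambda_0)=0$ because $H_r\ne0$; this annihilates the left-hand side at $p=r-1$, and so on — one gets $B_{p-1}(\lambda_{r-p})=0$ for $p=r,r-1,\dots,0$, the case $p=0$ reading $\lambda_r=0$, in contradiction with $\lambda_r\ne0$. This proves the proposition.

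\emph{Main obstacle.} The step I expect to demand the most care is the passage from $\bar B_pL=L_pB_{p-1}$ to the clean recursion \eqref{pchp}. In the differential-difference case the level-raising term in the analogue of \eqref{pch} is controlled by the pure shift $T-F_{u_x}$, which commutes with all the operators $\bar A_k$ of the Laplace transformation; here the corresponding object is $T_iB_0$, and $T_i$ does not commute with the coefficients $b_k$ appearing in $B_p$ and $\bar B_p$, so one has to keep $T_i$ outermost throughout and exploit that the $j$-shift operators act on $R(\xi)$ only through its coefficients $\lambda_k$. Once \eqref{pchp} is available, the bookkeeping of indices in \eqref{pind} and the verification that the extra term in the inductive step is precisely the one cancelled by the recurrence for $b_{p+1}$ are routine.
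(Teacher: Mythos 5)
Your proof is correct and takes essentially the same route as the paper: both arguments collect the coefficients of the arbitrary function $\xi(i+s)$, use the Laplace operators $B_p$, $\bar{B}_p$ together with $\bar{B}_p L = L_p B_{p-1} = (T_i - T_j^p(F_{u_{0,1}}))B_p - H_p B_{p-1}$ to establish $B_p(\lambda_{r-p})=0$ by an upward induction, and then descend through the very same relations $T_i\bigl(B_p(\lambda_{r-p-1})\bigr)=H_p B_{p-1}(\lambda_{r-p})$ (the paper's $H_{r-k}\Lambda_k = T_i(\Lambda_{k-1})$ with $\Lambda_k = B_{r-k-1}(\lambda_k)$) to force $\lambda_r=0$, contradicting $\lambda_r\ne 0$, while the second assertion follows in both cases from $B_0(\lambda_r)=0$. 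The only differences are organizational: you apply $\bar{B}_p$ at the operator level before collecting coefficients and verify the inductive step through the recurrence $b_{p+1}H_p = T_i^{-1}(b_p)T_j(H_p)$, whereas the paper first rewrites the scalar chain via $F_{u_{0,1}}T_j + F_u = T_i B_0 - L$ and lets the factorized form of $L_{r-k+1}$ do that cancellation.
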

\begin{proof}
Collecting the coefficients at $f(i+k)$, $k =\overline{0, r + 1}$, in the equality $L(R(f(i))) = 0$ and taking
the arbitrariness of $f$ into account, we obtain the following chain of the relations
\begin{eqnarray*} 
&&T_i\left(B_0(\lambda_r)\right) =0, \\ 
&&T_i\left(B_0(\lambda_{k-1})\right) = (F_{u_{0,1}} T_j + F_u) (\lambda_k), \qquad  1 \le k \le r, \\ 
&&(F_{u_{0,1}} T_j + F_u) (\lambda_0) = 0. 
\end{eqnarray*}
It is easy to check that $F_{u_{0,1}} T_j + F_u = T_i B_0 - L$ and the above chain can be rewritten as
\begin{eqnarray}
&& B_0(\lambda _r)=0, \label{B0} \\
&&T_i\left(B_0(\lambda _{k-1}-\lambda_k)\right) + L(\lambda_k)= 0, \qquad 1 \le k \le r, \label{Brj} \\
&&T_i\left(B_0(\lambda _0)\right) - L(\lambda_0) = 0. \label{Br} 
\end{eqnarray}
Applying the operator $\bar{B}_{r-k+1}$ to \eqref{Brj} and taking \eqref{xl} into account, we obtain
\[ T_i\left(B_{r-k+1}(\lambda_{k-1} - \lambda_k)\right) + L_{r-k+1} \left(B_{r-k} (\lambda_k)\right) = 0, \qquad 1 \le k \le r,\]
where $\bar{B}_k$ and $B_k$ are defined by \eqref{Bj}. Thus, $B_{r-k+1}(\lambda_{k-1})=0$ if $B_{r-k}(\lambda_k) = 0$. This and \eqref{B0} imply $B_{r-k}(\lambda_k) = 0$ for all $k$ from $r$ to $0$. And the equality $B_{r-k}(\lambda_k) = 0$ gives us the relations 
\begin{equation}\label{hh} 
\bar{B}_{r-k} L (\lambda_k) = L_{r-k} \left(B_{r-k-1} (\lambda_k)\right) = - H_{r-k} B_{r-k-1} (\lambda_k).
\end{equation}

Now let us apply the operators $\bar{B}_r$ and $\bar{B}_{r-k}$ to the equalities \eqref{Br} and \eqref{Brj}, respectively. Taking \eqref{hh} into account and using the notations $\Lambda_k=B_{r-k-1} (\lambda_k)$, we obtain 
\[ H_r \Lambda_0 = 0, \qquad H_{r-k} \Lambda_k = T_i(\Lambda_{k-1}), \quad 1 \le k  \le r. \]
The above chain of the relations means that $\Lambda_k=0$ for all $k=\overline{0,r}$ if $H_k=0$ for all $k \le r$. But this contradicts the condition $\Lambda_r=\lambda_r \ne 0$ in the definition of $i$-symmetry drivers.
\end{proof}
Propositions~\ref{p2}--\ref{t3} (and their `symmetrical' versions obtained by interchanging $i \leftrightarrow j$) together prove the following statement.
\begin{theorem}
Equation \eqref{uij} admits both formal $i$-integrals and formal $j$-integrals if and only if this equation possesses both $i$- and $j$-symmetry drivers.
\end{theorem}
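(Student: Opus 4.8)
The plan is to assemble the theorem from Propositions~\ref{p2}--\ref{t3} and their mirror images under the interchange $i\leftrightarrow j$, reducing everything to four scalar conditions on equation~\eqref{uij}. Write $(A)$ for ``$H_p=0$ for some $p\ge 0$'', $(B)$ for ``$G_q=0$ for some $q\ge 0$'', $(C)$ for ``there is a non-zero $\theta[u]\in\ker(T_i-F_{u_{0,1}}^{-1})$'', and $(D)$ for ``there is a non-zero $\vartheta[u]\in\ker(T_j-F_{u_{1,0}}^{-1})$''. Taken in pairs, the propositions of this section give two equivalences: equation~\eqref{uij} admits a formal $j$-integral if and only if $(A)$ and $(C)$ hold (Proposition~\ref{p2} for the forward implication, Proposition~\ref{t2} for the converse), and equation~\eqref{uij} admits an $i$-symmetry driver if and only if $(A)$ and $(D)$ hold (Proposition~\ref{t3} for the forward implication, Proposition~\ref{p3} for the converse).

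Next I would invoke the fact that the interchange $i\leftrightarrow j$, which replaces $F(u,u_{1,0},u_{0,1})$ by $F(u,u_{0,1},u_{1,0})$ and swaps $T_i$ with $T_j$, is an involution on the class of equations~\eqref{uij}; under it the Laplace $j$-invariants $H_k$ and the Laplace $i$-invariants $G_k$ are exchanged, formal $j$-integrals become formal $i$-integrals, and $i$-symmetry drivers become $j$-symmetry drivers, while the conditions $(C)$ and $(D)$ are interchanged. Applying this involution to the two equivalences just stated yields the mirror pair: equation~\eqref{uij} admits a formal $i$-integral if and only if $(B)$ and $(D)$ hold, and it admits a $j$-symmetry driver if and only if $(B)$ and $(C)$ hold.

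With these four equivalences in hand the theorem follows by pure logic. The statement ``equation~\eqref{uij} admits a formal $i$-integral and a formal $j$-integral'' is equivalent to $(A)\wedge(C)\wedge(B)\wedge(D)$, and the statement ``equation~\eqref{uij} admits an $i$-symmetry driver and a $j$-symmetry driver'' is equivalent to $(A)\wedge(D)\wedge(B)\wedge(C)$. The two conjunctions coincide, so the two sides of the theorem are equivalent, and no further computation is required.

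The single point that needs care, and which I would write out explicitly rather than leave to the reader, is the appeal to the $i\leftrightarrow j$ symmetry. One must verify that after the interchange the equation is again of the admissible form~\eqref{uij} (in particular $F_u F_{u_{1,0}} F_{u_{0,1}}\ne 0$ is preserved), that the shift rules for $T_i^{\pm 1}$, $T_j^{\pm 1}$ and the two factorizations of $L$ in~\eqref{lopd} go over into one another, and that under the swap the recurrences defining $b_k$, $H_k$ together with the operators $B_k$, $\bar{B}_k$ become precisely the recurrences for the $G$-sequence, so that the mirrored statements are genuine specializations of Propositions~\ref{p2}--\ref{t3} and not merely analogous results. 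Once this dictionary is spelled out, the argument is complete.
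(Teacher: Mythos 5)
Your proposal is correct and follows essentially the same route as the paper: the author likewise obtains the theorem by combining Propositions~\ref{p2}--\ref{t3} with their $i\leftrightarrow j$ mirror versions, which is exactly your reduction to the four conditions $H_p=0$, $G_q=0$, $\ker(T_i-F_{u_{0,1}}^{-1})\ne\{0\}$, $\ker(T_j-F_{u_{1,0}}^{-1})\ne\{0\}$ and the observation that both sides of the equivalence amount to the same conjunction. Your explicit bookkeeping of the four conditions and the remark that the $i\leftrightarrow j$ dictionary should be verified are just a more detailed write-up of what the paper leaves implicit.
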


We can use formal integrals for obtaining `genuine' integrals, all the ways described at the end of Section~\ref{s2} are applicable for this purpose in the pure discrete case too. As an example, let us consider the equation
\begin{equation}\label{dliuv}
u_{1,1} = \frac{(u_{1,0}-1)(u_{0,1}-1)}{u}
\end{equation} 
from \cite{Hirota}. Since $H_1=G_1=0$ for this equation, we can calculate its formal $j$-integral
\[ \mathcal{J} = \frac{u_{0,1}+u-1}{u_{0,1}(u_{0,1}-1)} T_j^2 - \left( \frac{u_{0,2}+u_{0,1}-1}{u_{0,1}^2(u_{0,1}-1)}(u-1) + \frac{u_{0,2} (u_{0,1}+u-1)}{u_{0,1}(u_{0,1}-1)^2} \right) T_j  + \frac{u_{0,2}+u_{0,1}-1}{u_{0,1}(u_{0,1}-1)}\]
by using Proposition~\ref{t2}. The formal $i$-integral of \eqref{dliuv} is defined by the same formula (up to interchanging $i \leftrightarrow j$). Solving the equation $J_*=\mathcal{J}$, we find the function 
\[ J = \left( \frac{u_{0,2}}{u_{0,1} - 1} +1 \right) \left( \frac{u-1}{u_{0,1}} +1 \right),\]
which coincides (up to a point transformation) with the $j$-integral of \eqref{dliuv} obtained in \cite{AdS}.  By the version of Proposition~\ref{p3} for the case $G_p=0$, the equation~\eqref{dliuv} has the $j$-symmetry driver
\[ \bar{R} = \frac{u_{0,-1} (u_{0,-1} - 1)}{u_{0,-1}+u_{0,-2}-1} T_j - \frac{u (u-1)}{u_{0,-1} (u_{0,-1}-1)} \cdot \frac{u_{0,-2} (u_{0,-2} - 1)}{u_{0,-2}+u_{0,-3}-1}.  \] 
The coefficients of the composition $\mathcal{J} \bar{R}$ also give us $j$-integrals:
\[ \mathcal{J} \bar{R} = T_j^3 + \frac{J(1-J)}{T_j^{-1}(J)} T_j^2 + \frac{(J-1)T_j^{-1}(J)}{T_j^{-2}(J)} T_j - \frac{J T_j^{-2}(J)}{T_j^{-1}(J)T_j^{-3}(J)}. \]
As in the semi-discrete case, it seems to be plausible that `genuine' integrals can always be obtained from formal integrals by at least one of the methods discussed in the last four paragraphs of Section~\ref{s2}. Note that one of these methods was successfully used to construct integrals even for an non-autonomous quad-graph equation in \cite{GYumj}.

%\section*{Acknowledgments}

\end{document}